\newcommand{\calN}{\mathcal{N}}
\newcommand{\calO}{\mathcal{O}}
\newcommand{\calP}{\mathcal{P}}
\newcommand{\bfD}{\mathbf{D}}
\newcommand{\bfI}{\mathbf{I}}
\newcommand{\bfy}{\mathbf{y}}
\newcommand{\bfz}{\mathbf{z}}
\newcommand{\bfm}{\mathbf{m}}
\newcommand{\bfnot}{\mathbf{0}}
\def\msa{\mathsf{A}}
\def\msb{\mathsf{B}}
\def\msy{\mathsf{Y}}
\newcommand{\mcb}[1]{\mathcal{B}(#1)}
\def\mcp{\mathcal{P}}
\def\rset{\mathbb{R}}
\def\zset{\mathbb{Z}}
\def\nset{\mathbb{N}}
\def\nsets{\mathbb{N}^*}
\def\rmd{\mathrm{d}}
\def\rme{\mathrm{e}}
\newcommand{\abs}[1]{\left\vert #1 \right\vert}
\newcommand{\ps}[2]{\left\langle#1,#2 \right\rangle}
\newcommandx{\norm}[2][1=]{\ifthenelse{\equal{#1}{}}{\left\Vert #2 \right\Vert}{\left\Vert #2 \right\Vert^{#1}}}
\newcommand{\defEns}[1]{\left\lbrace #1 \right\rbrace }
\newcommand{\plusinfty}{+\infty}
\def\eqsp{\;}
\newcommandx\sequence[3][2=,3=]
\newcommandx\sequenceD[3][2=,3=]
\newcommandx{\sequencek}[2][2=k\in\nset]{\ensuremath{( #1)_{#2}}}
\newcommand{\wrt}{w.r.t.}
\def\iid{i.i.d.}
\newcommand{\ie}{\textit{i.e.}}
\def\Id{\operatorname{Id}}
\newcommand{\ensemble}[2]{\left\{#1\,:\eqsp #2\right\}}
\newcommand{\set}[2]{\ensemble{#1}{#2}}
\def\vareps{\varepsilon}
\def\sphere{\mathbb{S}}
\def\sphereD{\mathbb{S}^{d-1}}
\newcommand{\1}{\mathbbm{1}}
\def\distance{\ell}
\newcommandx{\wasserstein}[3][1=\distance,3=]{\mathbf{W}_{#1}^{#3}\left(#2\right)}
\newcommandx{\wassersteinLigne}[3][1=\distance,3=]{\mathbf{W}_{#1}^{#3}(#2)}
\newcommandx{\wassersteinD}[1][1=\distance]{\mathbf{W}_{#1}}
\newcommandx{\wassersteinDLigne}[1][1=\distance]{\mathbf{W}_{#1}}
\newcommandx{\swasserstein}[3][1=\distance,3=]{\mathbf{SW}_{#1}^{#3}\left(#2\right)}
\newcommandx{\swassersteinLigne}[3][1=\distance,3=]{\mathbf{SW}_{#1}^{#3}(#2)}
\newcommandx{\swassersteinD}[1][1=\distance]{\mathbf{SW}_{#1}}
\newcommandx{\swassersteinDLigne}[1][1=\distance]{\mathbf{SW}_{#1}}
\def\argmin{\mathrm{argmin}}
\def\us{u}
\def\uss{u^{\star}}
\def\unifS{\boldsymbol{\sigma}}
\newcommand\fraca[2]{(#1/#2)}
\def\hmu{\hat{\mu}}
\def\hnu{\hat{\nu}}
\def\model{\mathscr{M}}
\def\noisy{\mathbf{v}}
\def\clean{\mathbf{u}}
\def\noise{\mathbf{w}}
\def\indiceset{I}
\def\indice{\mathbf{i}}
\def\indicedic{\mathbf{j}}
\def\indicetrois{\mathbf{k}}
\def\patch{P}
\def\dic{D}
\def\PSNR{\mathrm{PSNR}}
\let\zz\[\let\zzz\]
\crefname{theorem}{theorem}{Theorems}
\Crefname{Theorem}{Theorem}{Theorems}
\newaliascnt{lemma}{theorem}
\crefname{lemma}{lemma}{lemmas}
\Crefname{Lemma}{Lemma}{Lemmas}
\newaliascnt{corollary}{theorem}
\crefname{corollary}{corollary}{corollaries}
\Crefname{Corollary}{Corollary}{Corollaries}
\newaliascnt{proposition}{theorem}
\newtheorem{proposition}[proposition]{Proposition}
\crefname{proposition}{proposition}{propositions}
\Crefname{Proposition}{Proposition}{Propositions}
\newaliascnt{definition}{theorem}
\crefname{definition}{definition}{definitions}
\Crefname{Definition}{Definition}{Definitions}
\newtheorem{remark}{Remark}
\crefname{example}{example}{examples}
\Crefname{Example}{Example}{Examples}
\crefname{figure}{figure}{figures}
\Crefname{Figure}{Figure}{Figures}
\newenvironment{pseudocode}[1][htb]
  {%
   \begin{algorithm}[#1]%
  }{\end{algorithm}}
\title{APPROXIMATE BAYESIAN COMPUTATION WITH \\ 
THE SLICED-WASSERSTEIN DISTANCE}
\name{Kimia Nadjahi$^1$\thanks{Supported by the Data Science \& AI chair from Télécom Paris, the French National Research Agency (ANR-16-CE23-0014 project) and the Polish National Science Center grant (NCN UMO-2018/31/B/ST1/00253).}, Valentin De Bortoli$^2$, Alain Durmus$^2$, Roland Badeau$^1$, Umut \c{S}im\c{s}ekli$^1$}
\address{$^1$ LTCI, T\'el\'ecom Paris, Institut Polytechnique de Paris, France \\
$^2$ CMLA, \'Ecole normale sup\'erieure Paris-Saclay, CNRS, Université Paris-Saclay, France}
\begin{document}
\ninept
\maketitle
\begin{abstract}
	Approximate Bayesian Computation (ABC) is a popular method for approximate inference in generative models with intractable but easy-to-sample likelihood. It constructs an approximate posterior distribution by finding parameters for which the simulated data are close to the observations in terms of summary statistics. These statistics are defined beforehand and might induce a loss of information, which has been shown to deteriorate the quality of the approximation. To overcome this problem, Wasserstein-ABC has been recently proposed, and compares the datasets via the Wasserstein distance between their empirical distributions, but does not scale well to the dimension or the number of samples. We propose a new ABC technique, called Sliced-Wasserstein ABC and based on the Sliced-Wasserstein distance, which has better computational and statistical properties. We derive two theoretical results showing the asymptotical consistency of our approach, and we illustrate its advantages on synthetic data and an image denoising task.
\end{abstract}
\begin{keywords}
	Likelihood-free inference, Approximate Bayesian Computation, Optimal transport, Sliced-Wasserstein distance
\end{keywords}
\section{Introduction}
\label{sec:intro}

\vspace{-5pt}

Consider the problem of estimating the posterior distribution of some model parameters $\theta \in \rset^{d_\theta}$ given $n$ data points $\bfy_{1:n} \in \msy^n$. This distribution has a closed-form expression given by the Bayes' theorem up to a multiplicative constant:\ \ $\pi(\theta | \bfy_{1:n}) \propto \pi(\bfy_{1:n} | \theta)\pi(\theta)$.
\noindent For many statistical models of interest, the likelihood $\pi(\bfy_{1:n} | \theta)$ cannot be numerically evaluated in a reasonable amount of time, which prevents the application of classical likelihood-based approximate inference methods. Nevertheless, in various settings, even if the associated likelihood is numerically intractable, one can still generate synthetic data given any model parameter value. This generative setting gave rise to an alternative framework of likelihood-free inference techniques. Among them, Approximate Bayesian Computation (ABC, \cite{Tavare97,Beaumont2002}) methods have become a popular choice and have proven useful in various practical applications (e.g., \cite{Peters2006,Tanaka2006,Wood2010}). The core idea of ABC is to bypass calculation of the likelihood by using simulations: the exact posterior is approximated by retaining the parameter values for which the synthetic data are close enough to the observations. Closeness is usually measured with a discrepancy measure between the two datasets reduced to some `summary statistics' (e.g., empirical mean or empirical covariance). While summaries allow a practical and efficient implementation of ABC, especially in high-dimensional data spaces, the quality of the approximate posterior distribution highly depends on them and constructing sufficient statistics is a non-trivial task. Summary statistics can be designed by hand using expert knowledge, which can be tedious especially in real-world applications, or in an automated way, for instance see \cite{Fearnhead2012}. 

Recently, discrepancy measures that view data sets as empirical probability distributions to eschew the construction of summary statistics have been proposed for ABC. Examples include the Kullback-Leibler divergence (KL, \cite{Jiang18}), maximum mean discrepancy \cite{Park16}, and Wasserstein distance \cite{Bernton2019}. This latter distance emerges from the optimal transport (OT) theory and has attracted abundant attention in statistics and machine learning due to its strong theoretical properties and applications on many domains. In particular, it has the ability of making meaningful comparisons even between probability measures with non-overlapping supports, unlike KL. However, the computational complexity of the Wasserstein distance rapidly becomes a challenge when the dimension of the observations is large. Several numerical methods have been proposed during the past few years to speed-up this computation. Wasserstein-ABC (WABC, \cite{Bernton2019}) uses an approximation based on the Hilbert space-filling curve and termed the Hilbert distance, which is computationally efficient but accurate for small dimensions only. Besides, under a general setting, the Wasserstein distance suffers from a curse of dimensionality in the sense that the error made when approximating it from samples grows exponentially fast with the data space dimension \cite{Weed2019}. These computational and statistical burdens can strongly affect the performance of WABC applied to high-dimensional data. 

The Sliced-Wasserstein distance (SW, \cite{rabin:et:al:2011,bonneel2015sliced}) is an alternative OT distance and leverages the attractive property that the Wasserstein distance between one-dimensional measures has an analytical form which can be efficiently approximated. SW is defined as an average of one-dimensional Wasserstein distances, and therefore has a significantly better computational complexity. Several recent studies have reported empirical success on generative modeling with SW \cite{Deshpande2018,csimcsekli2018sliced,Kolouri2019Sliced} as well as nice asymptotic and statistical properties \cite{Bonnotte2013,deshpande2019max,2019arXiv190604516N}, making this %
alternative distance increasingly popular. In this paper, we develop a novel ABC framework that uses SW as the data discrepancy measure. This defines a likelihood-free method which does not require choosing summary statistics and is efficient even with high-dimensional observations. We derive asymptotical guarantees on the convergence of the resulting ABC posterior, and we illustrate the superior empirical performance of our methodology by applying it on a synthetical problem and an image denoising task.

\vspace{-5pt}

\section{Background}
\label{sec:background}

\vspace{-5pt}

\noindent Consider a probability space $(\Omega, \mathcal{F}, \mathbb{P})$ with associated expectation operator $\mathbb{E}$, on which all the random variables are defined. Let $(Y_k)_{k \in \nsets}$ be a sequence of independent and identically distributed (\iid) random variables associated with some observations $(\bfy_k)_{k \in \nsets}$ valued in  $\msy \subset \mathbb{R}^{d}$. Denote by $\mu_{\star}$ the common distribution of  $(Y_k)_{k \in \nsets}$ and by $\mathcal{P}(\msy)$ the set of probability measures on $\msy$. For any $n \in \nsets$, $\hmu_n = \fraca{1}{n} \sum_{i=1}^n \updelta_{Y_i}$ denotes the empirical distribution corresponding to $n$ observations. Consider a statistical model $\model_{\Theta} =  \{ \mu_\theta \in \calP(\msy)\, :\, \theta \in \Theta \}$ parametrized by $\Theta \subset \rset^{d_{\theta}}$. We focus on parameter inference for purely generative models: for any $\theta \in \Theta$, we can draw \iid~samples $(Z_{k})_{k \in \nsets}\in \msy^{\nsets}$ from $\mu_\theta$, but the numerical evaluation of the likelihood is not possible or too expensive. For any $m \in \nsets$,  $\hmu_{\theta, m} = \fraca{1}{m} \sum_{i=1}^m \updelta_{Z_i}$ is the empirical distribution of $m$ \iid~samples generated by $\mu_{\theta}$, $\theta \in \Theta$. We assume that \begin{enumerate*}[label=(\alph*)]
\item  $\msy$, endowed with the Euclidean distance $\rho$, is a Polish space (\ie, complete and separable),
\item  $\Theta$, endowed with the distance $\rho_\Theta$, is a Polish space,
\item parameters are identifiable, \ie~$\mu_\theta = \mu_{\theta'}$ implies $\theta = \theta'$.
\end{enumerate*} $\mcb{\msy}$ denotes the Borel $\sigma$-field of $(\msy,\rho)$. \vspace{4pt}

\noindent \textbf{Approximate Bayesian Computation.} ABC methods are used to approximate the posterior distribution in generative models when the likelihood is numerically intractable but easy to sample from. The basic and simplest ABC algorithm is an acceptance-rejection method \cite{Tavare97}, which iteratively draws a candidate parameter $\theta'$ from a prior distribution $\pi$, and `synthetic data' $\bfz_{1:m} = (\bfz_i)_{i=1}^m$ from $\mu_{\theta'}$, and keeps $\theta'$ if $\bfz_{1:m}$ is close enough to the observations $\bfy_{1:n} = (\bfy_i)_{i=1}^n$. Specifically, the acceptance rule is $\bfD\big(s(\bfy_{1:n}), s(\bfz_{1:m})\big) \leq \varepsilon$, where $\bfD$ is a data discrepancy measure taking non-negative values, $\varepsilon$ is a tolerance threshold, and $s : \sqcup_{n \in \nsets} \msy^{n} \rightarrow \rset^{d_s}$ with small $d_s$ is a summary statistics. The algorithm is summarized in Pseudo-code~\ref{ps:vanilla_abc} and returns samples of $\theta$ that are distributed from:
\begin{equation} \label{eq:abc_posterior}
	\pi_{\bfy_{1:n}}^\varepsilon(\theta) = \frac{\pi(\theta) \int_{\msy^m} \1 \{ \bfD\big(s(\bfy_{1:n}), s(\bfz_{1:m})\big) \leq \varepsilon \} \rmd \mu_\theta(\bfz_{1:m})}{\int_\Theta \rmd \pi(\theta) \int_{\msy^m} \1 \{ \bfD\big(s(\bfy_{1:n}), s(\bfz_{1:m})\big) \leq \varepsilon \} \rmd \mu_\theta(\bfz_{1:m})}
\end{equation}

\noindent The choice of $s(\cdot)$ directly impacts the quality of the resulting approximate posterior: if the statistics are sufficient statistics, $\pi_{\bfy_{1:n}}^\varepsilon(\theta)$ converges to the true posterior $\pi(\theta | \bfy_{1:n})$ as $\varepsilon \to 0$, otherwise, the limiting distribution is at best $\pi(\theta | s(\bfy_{1:n}))$ \cite{Sisson2018, Frazier2018}. Wasserstein-ABC has been proposed to avoid this loss of information. \vspace{4pt}

\RestyleAlgo{ruled}
\begin{pseudocode}[t]
\caption{Vanilla ABC.}  \label{ps:vanilla_abc}
 \KwIn{observations $\bfy_{1:n}$, number of iterations $T$, data discrepancy measure $\bfD$, summary statistics $s$, tolerance threshold $\varepsilon > 0$.}
 \For{$t=1,\dots,T$}{
 \Repeat{$\bfD\big(s(\bfy_{1:n}), s(\bfz_{1:m})\big) \leq \varepsilon$}{$\theta \sim \pi(\cdot)$ and $\bfz_{1:m} \sim \mu_\theta$ i.i.d.}
 $\theta^{(t)} = \theta$
 }
 \Return $\theta^{(1)}, \dots, \theta^{(T)}$ 
\end{pseudocode}

\noindent \textbf{Wasserstein distance and ABC. } For $p \geq 1$, consider $\calP_p(\msy) = \{ \mu \in \mcp(\msy) \, : \, \int_{\msy} \norm{y - y_0 }^p \rmd\mu(y) < \plusinfty\}$ for some $y_0 \in \msy$ and the Wasserstein distance of order $p$
defined for any $\mu, \nu \in \calP_p(\msy)$ by,
\begin{equation}
\label{eq:def_wasser}
 \wassersteinD[p]^{p}(\mu, \nu) = \inf_{\gamma \in \Gamma(\mu, \nu)} \defEns{ \int_{\msy \times \msy} \norm{x - y}^p \rmd\gamma(x,y)}  \eqsp,
 \end{equation}
where $\Gamma(\mu, \nu)$ is the set of probability measures $\gamma$ on $(\msy \times \msy, \mcb{\msy} \otimes \mcb{\msy})$ verifying: $\forall \msa \in \mcb{\msy}$, $\gamma(\msa \times \msy) = \mu(\msa)$, $\gamma(\msy\times \msa) = \nu(\msa)$. %

Evaluating the Wasserstein distance between multi-dimensional probability measures turns out to be numerically intractable in general, and solving \eqref{eq:def_wasser} between empirical distributions over $n$ samples leads to computational costs in $\calO(n^3 \log(n))$ \cite{peyre2019computational}. Nevertheless, $\wassersteinD[p]$ between one-dimensional measures $\mu, \nu \in \calP_p(\rset)$ has a closed-form expression \cite[Theorem 3.1.2.(a)]{rachev:ruschendorf:1998}, given by:
\begin{align}
  \label{eq:wp1d}
    \wassersteinD[p]^p(\mu, \nu) &= \int_{0}^1 \abs{ F_\mu^{-1}(t) - F_\nu^{-1}(t) }^p \rmd t \eqsp, %
\end{align}
where %
$F_\mu^{-1}$ and $F_\nu^{-1}$ denote the quantile functions of $\mu$ and $\nu$ respectively. For empirical one-dimensional distributions, \eqref{eq:wp1d} can be efficiently approximated by simply sorting the $n$ samples drawn from each distribution and computing the average cost between the sorted samples. This amounts to $\calO(n\log(n))$ operations at worst.

Wasserstein-ABC \cite{Bernton2019} is a variant of ABC \eqref{ps:vanilla_abc} that uses $\wassersteinD[p]$, $p \geq 1$ between the empirical distributions of the observed and synthetic data, in place of the discrepancy measure $\bfD$ between summaries. To make this method scalable to any dataset size, \cite{Bernton2019} introduces a new approximation of \eqref{eq:def_wasser}, the Hilbert distance, which extends the idea behind the computation of $\wassersteinD[p]$ in 1D to higher dimensions, by sorting samples according to their projection obtained via the Hilbert space-filling curve. This alternative can be computed in $\calO(n\log(n))$, but yields accurate approximations only for low dimensions \cite{Bernton2019}. They also use a second approximation, the swapping distance, based on an iterative greedy swapping algorithm. However, each iteration requires $n^2$ operations, and there is no guarantee of convergence to $\wassersteinD[p]$.  

\section{Sliced-Wasserstein ABC}
\label{sec:swabc}
\vspace{-5pt}

\noindent \textbf{Sliced-Wasserstein distance. } The analytical expression of $\wassersteinD[p]$ in \eqref{eq:wp1d} motivates the formulation of an alternative OT distance, called the Sliced-Wasserstein distance \cite{rabin:et:al:2011, bonneel2015sliced}. SW is obtained by reducing multi-dimensional distributions to one-dimensional representations through linear projections, and then by averaging 1D Wasserstein distances between these projected distributions. More formally, we denote by $\sphereD = \set{\us \in \rset^d}{\norm{\us}_2 = 1}$ the $d$-dimensional unit sphere, and by $\ps{\cdot}{\cdot}$ the Euclidean inner-product. For any $\us \in \sphereD$, $\uss$ is the linear form associated with $\us$ such that for any $ y \in \msy,\ \uss(y) = \ps{u}{y}$. For $p \geq 1$, the Sliced-Wasserstein distance of order $p$ between $\mu,\nu \in \mathcal{P}_p(\msy)$ is defined as,
\begin{equation}
\label{eq:def_sliced_wasser}
  \swassersteinD[p]^{p}(\mu, \nu) = \int_{\sphere^{d-1}} \wassersteinD[p]^p(\uss_{\sharp} \mu, \uss_{\sharp} \nu) \rmd\unifS(\us) \eqsp ,
\end{equation}
where $\unifS$ is the uniform distribution on $\sphere^{d-1}$ and for any measurable function $f :\msy \to \rset$ and $\zeta \in \mcp(\msy)$,  $f_{\sharp}\zeta$ is the push-forward measure of $\zeta$ by $f$, defined as: $\forall \msa \in \mcb{\rset}$, $f_{\sharp}\zeta(\msa) = \zeta(f^{-1}(\msa))$, with $f^{-1}(\msa) = \{y \in \msy \, : \, f(y) \in \msa\}$. $\swassersteinD[p]$ is a distance on $\mcp_p(\msy)$ \cite{Bonnotte2013} with significantly lower computational requirements than the Wasserstein distance: in practice, the integration in \eqref{eq:def_sliced_wasser} is approximated with a finite-sample average, using a simple Monte Carlo (MC) scheme. %

SW also seems to have better statistical properties than the Wasserstein distance and its approximations. We illustrate it with the task of estimating the scaling factor of the covariance matrix in a multivariate Normal model, as in the supplementary material of \cite{Bernton2019}. For any $\sigma >0$, denote by $\mu_\sigma$ the $d$-dimensional Gaussian distribution with zero-mean and covariance matrix $\sigma^2 \bfI_d$.
Observations are assumed i.i.d. from $\mu_{\sigma_{\star}}$ with $\sigma_\star^2 = 4$, and we draw the same number of i.i.d. data from $\mu_{\sigma}$ for 100 values of $\sigma^2$ equispaced between 0.1 and 9. We then compute $\wassersteinD[2]$ and $\swassersteinD[2]$ between the empirical distributions of the samples, and the swapping and Hilbert approximations presented in \cite{Bernton2019}, for $d \in \{ 2, 10, 100 \} $ and 1000 observations. $\wassersteinD[2]$ between two Gaussian measures has an analytical formula, which boils down in our setting to: $	\wassersteinDLigne[2]^2(\mu_{\sigma_\star}, \mu_\sigma) = d(\sigma_\star - \sigma)^2$, and we approximate the exact SW using a MC approximation of: 
\begin{equation*}
	\swassersteinD[2]^2(\mu_{\sigma_\star}, \mu_\sigma) = \wassersteinD[2]^2(\mu_{\sigma_\star}, \mu_\sigma) \int_{\sphereD} u^Tu\ \rmd\unifS(\us)\eqsp, \label{eq:sw_gauss}
      \end{equation*}
This formula is derived from \eqref{eq:def_sliced_wasser} and the exact $\wassersteinD[2]$ between one-dimensional Gaussian distributions. We also compute KL with the estimator proposed for KL-based ABC (KL-ABC, \cite{Jiang18}). \Cref{fig:comparison_distances} shows the distances plotted against $\sigma^2$ for each $d$. When the dimension increases, we observe that (i) as pointed out in \cite{Bernton2019}, the quality of the approximation of empirical Wasserstein returned by Hilbert and swapping rapidly deteriorates, and (ii) SW, approximated using densities or samples, is the only approximate metric that attains its minimum at $\sigma_\star^2$. This curse of dimensionality can be a limiting factor for the performance of WABC and KL-ABC in high dimensions. \vspace{4pt}

\noindent \textbf{Sliced-Wasserstein ABC. } Motivated by the practical success of SW regardless of the dimension value in the previous experiment, we propose a variant of ABC based on SW, referred to as the Sliced-Wasserstein ABC (SW-ABC). Our method is similar to WABC in the sense that it compares empirical distributions, but instead of $\wassersteinD[p]$, we choose the discrepancy measure to be $\swassersteinD[p]$, $p \geq 1$. The usage of SW allows the method to scale better to the data size and dimension. The resulting posterior distribution, called the SW-ABC posterior, is thus defined in \eqref{eq:abc_posterior} with $\bfD$ replaced by $\swassersteinD[p]$.

\begin{figure}[t]
  \includegraphics[width=0.485\textwidth, trim={2.4mm 0 0 0}, clip]{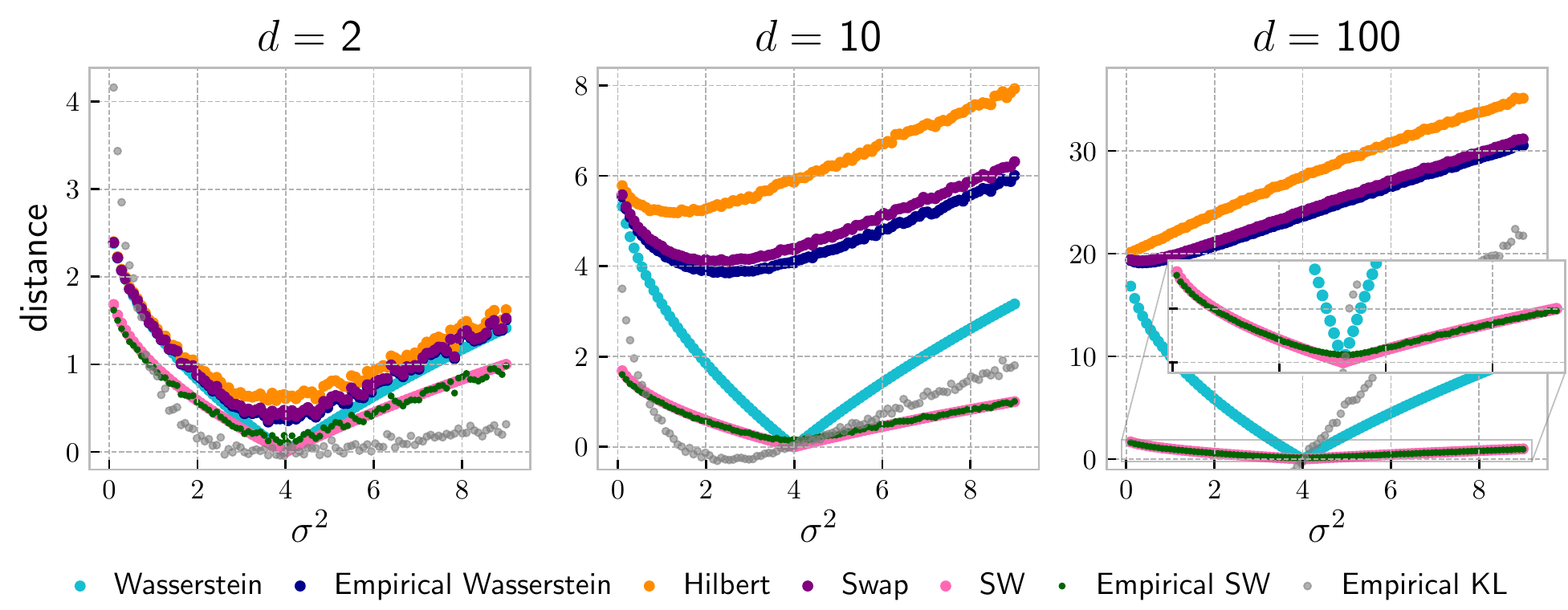}
  \caption{Comparison of OT distances and KL between data generated from $d$-dimensional Gaussian distributions $\mu_{\sigma}$ vs. $\mu_{\sigma_{\star}}$, $\sigma_{\star}^2 =4$, with 1000 i.i.d draws. SW is approximated with 100 random projections.}
  \label{fig:comparison_distances}
  \vspace{-10pt}
\end{figure}

\section{Theoretical Study}

\vspace{-5pt}

In this section, we analyze the asymptotic behavior of the SW-ABC posterior under two different regimes. Our first result concerns the situation where the observations $\bfy_{1:n}$ are fixed, and $\varepsilon$ goes to zero. We prove that the SW-ABC posterior is asymptotically consistent in the sense that it converges to the true posterior, under specific assumptions on the density used to generate synthetic data.

\begin{proposition}%
\label{prop:1} 
  Let $p \geq 1$. Suppose that $\mu_\theta$ has a density $f_\theta$ \wrt~the Lebesgue measure such that $f_\theta$ is continuous and there exists  $\calN_\Theta \subset \Theta$ satisfying $ \sup_{\theta \in \Theta \backslash \calN_\Theta} f_\theta(\bfy_{1:n}) < \infty  $ and $\pi(\calN_\Theta) = 0$. In addition, assume that there exists $\bar{\varepsilon} > 0$ such that  $ \sup_{\theta \in \Theta \backslash \calN_\Theta}\ \sup_{\bfz_{1:m} \in \msa^{\bar{\varepsilon}}} f_\theta(\bfz_{1:m}) < \infty$, where  $\msa^{\bar{\varepsilon}} = \set{ \bfz_{1:m} }{ \swassersteinD[p](\bfy_{1:n}, \bfz_{1:m}) \leq \bar{\varepsilon}}  $.
 Then, with $\bfy_{1:n}$ fixed, the SW-ABC posterior converges to the true posterior as $\varepsilon$ goes to 0, in the sense that, for any measurable $\msb \subset \Theta$, $			\lim_{\varepsilon \rightarrow 0} \pi^{\varepsilon}_{\bfy_{1:n}}(\msb) =  \pi (\msb | \bfy_{1:n})$, where $\pi^{\varepsilon}_{\bfy_{1:n}}$ is defined by \eqref{eq:abc_posterior}.
\end{proposition}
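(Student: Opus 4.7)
The plan is to recast the SW-ABC posterior so that both numerator and denominator are amenable to a Lebesgue-differentiation-type limit combined with dominated convergence in $\theta$. Using the assumed density $f_\theta$ and writing $\msa^\varepsilon := \set{\bfz_{1:m}}{\swassersteinD[p](\bfy_{1:n}, \bfz_{1:m}) \leq \varepsilon}$, equation \eqref{eq:abc_posterior} with $\bfD = \swassersteinD[p]$ reads
\[
\pi^\varepsilon_{\bfy_{1:n}}(\msb) = \frac{\int_\msb \pi(\theta)\, I_\varepsilon(\theta)\, \rmd\theta}{\int_\Theta \pi(\theta)\, I_\varepsilon(\theta)\, \rmd\theta}, \quad I_\varepsilon(\theta) := \int_{\msa^\varepsilon} f_\theta(\bfz_{1:m})\, \rmd\bfz_{1:m}.
\]
Since $\bfz_{1:m} \mapsto \swassersteinD[p](\bfy_{1:n}, \bfz_{1:m})$ is continuous, the set $\msa^\varepsilon$ is open and nonempty, so $V_\varepsilon := \Leb(\msa^\varepsilon) > 0$; dividing numerator and denominator by $V_\varepsilon$ sets up the limit.

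The core step is to establish the pointwise convergence $I_\varepsilon(\theta)/V_\varepsilon \to f_\theta(\bfy_{1:n})$ for every $\theta \in \Theta \setminus \calN_\Theta$. Because $\swassersteinD[p]$ is a genuine metric on $\mcp_p(\msy)$ which metrizes weak convergence, $\swassersteinD[p](\bfy_{1:n}, \bfz_{1:n}) = 0$ forces $\bfz_{1:n}$ to be a permutation of $\bfy_{1:n}$ (we work in the natural setting $m = n$). A compactness/contradiction argument then shows that as $\varepsilon \to 0$, every $\bfz_{1:n} \in \msa^\varepsilon$ is driven arbitrarily close in Euclidean norm to the $S_n$-orbit of $\bfy_{1:n}$. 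Since the joint density $f_\theta(\bfz_{1:n}) = \prod_i f_\theta(z_i)$ is continuous and permutation-invariant, one obtains $\sup_{\bfz \in \msa^\varepsilon} |f_\theta(\bfz) - f_\theta(\bfy_{1:n})| \to 0$, whence $I_\varepsilon(\theta)/V_\varepsilon \to f_\theta(\bfy_{1:n})$.

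It remains to apply dominated convergence in $\theta$. For any $\varepsilon \leq \bar{\varepsilon}$, the inclusion $\msa^\varepsilon \subset \msa^{\bar{\varepsilon}}$ yields $I_\varepsilon(\theta)/V_\varepsilon \leq \sup_{\bfz \in \msa^{\bar{\varepsilon}}} f_\theta(\bfz)$, which by hypothesis is $\pi$-a.s.~bounded by a finite constant $C$, so $\theta \mapsto \pi(\theta)\, C$ is an integrable majorant. Dominated convergence then gives $\int_\msb \pi(\theta)\, I_\varepsilon(\theta)/V_\varepsilon\, \rmd\theta \to \int_\msb \pi(\theta)\, f_\theta(\bfy_{1:n})\, \rmd\theta$, and likewise for the denominator with $\msb$ replaced by $\Theta$; the ratio equals $\pi(\msb \mid \bfy_{1:n})$ by Bayes' rule. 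The main obstacle is the pointwise step: $\msa^\varepsilon$ is not a Euclidean ball around a single point but a shrinking neighborhood of a finite $S_n$-orbit, so the standard Lebesgue differentiation theorem does not apply directly, and one must exploit the continuity and permutation invariance of $f_\theta$ together with the tightness properties of $\swassersteinD[p]$ on empirical measures with fixed cardinality.
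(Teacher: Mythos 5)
Your argument is essentially correct, but it takes a genuinely different route from the paper's. The paper treats the analytic core as a black box: it invokes \cite[Proposition 3.1]{Bernton2019} and only verifies its two hypotheses, namely that $\swassersteinD[p]$ separates empirical measures (it is a metric, by \cite{Bonnotte2013}) and that $\bfz_{1:m}\mapsto\swassersteinD[p](\hmu_n,\hmu_{\theta,m})$ is continuous, the latter via $\swassersteinD[p]\le C^{1/p}\wassersteinD[p]$ together with the fact that $\wassersteinD[p]$ metrizes weak convergence in $\calP_p(\msy)$. You instead re-derive the content of that cited proposition from scratch: normalize by $\Leb(\msa^{\varepsilon})$, show the acceptance region collapses onto the $S_n$-orbit of $\bfy_{1:n}$, use continuity and exchangeability of the joint density, and finish with dominated convergence. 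Both arguments ultimately rest on the same two properties of SW (identity of indiscernibles and continuity), so your version buys self-containedness at the price of having to supply details that the citation hides, and a few of them deserve a line each: (i) the restriction $m=n$, which you rightly make explicit and which the statement leaves implicit (for $m\neq n$ the limiting density would be $\prod_i f_\theta(\bfy_i)^{m/n}$, not the true posterior); (ii) $\msa^{\varepsilon}$ is closed rather than open, but it contains the open set $\{\bfz_{1:n}:\swassersteinD[p](\bfy_{1:n},\bfz_{1:n})<\varepsilon\}$, which is a neighborhood of $\bfy_{1:n}$, so $\Leb(\msa^{\varepsilon})>0$; (iii) your compactness/contradiction step needs $\msa^{\bar\varepsilon}$ to be bounded in $\msy^n$, which does hold because an atom escaping to infinity forces the projected quantile functions, hence $\swassersteinD[p]$, to blow up on a set of directions of positive $\unifS$-measure, but this is exactly the ``tightness'' claim you defer and should be proved rather than asserted; (iv) the limit of the denominator, the marginal likelihood $\int_\Theta f_\theta(\bfy_{1:n})\,\rmd\pi(\theta)$, must be positive for the ratio to converge. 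None of these is a conceptual gap; they are the points where your proof carries the burden that the paper shifts onto the reference.
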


\begin{proof}[Proof of \Cref{prop:1}]
	The proof consists in applying \cite[Proposition 3.1]{Bernton2019}, which establishes the conditions for the data discrepancy measure to yield an ABC posterior that converges to the true posterior in the asymptotic regime we consider. This amounts to verify that:
	\begin{enumerate}[wide, labelwidth=!, labelindent=0pt,label=(\roman*)]
		\item For any $\bfy_{1:n}$ and $\bfz_{1:m}$, with respective empirical distributions $\hmu_n$ and $\hmu_{\theta, m}$, $\swassersteinD[p](\hmu_n, \hmu_{\theta, m}) = 0$ if and only if $\hmu_n = \hmu_{\theta, m}$. \label{cond:i}
		\item $\swassersteinD[p]$ is continuous in the sense that, if $\sequencek{\bfz^k_{1:m}}$ converges to $\bfz_{1:m}$ in the metric $\rho$, then, for any empirical distribution $\hmu_n$,\; $\lim_{k \rightarrow \infty} \swassersteinD[p](\hmu_n, \hmu^k_{\theta, m}) = \swassersteinD[p](\hmu_n, \hmu_{\theta, m})$, where $\hmu^k_{\theta, m}$ is the empirical measure of $\bfz_{1:m}^k$. \label{cond:ii}
	\end{enumerate}

	\noindent Condition~\ref{cond:i} follows from the fact that $\swassersteinD[p]$ is a distance \cite[Proposition 5.1.2]{Bonnotte2013}.
	Now, let $\bfy' \in \msy$ and $\psi : \msy \to \rset$ be a continuous function such that for any $ \bfy \in \msy,\ | \psi(\bfy) | \leq K \big( 1 + \rho(\bfy', \bfy)^p \big)$ with $K \in \rset$. Since $\sequencek{\bfz^k_{1:m}}$ converges to $\bfz_{1:m}$ in the metric $\rho$ and $\psi$ is continuous, we get that $ \lim_{k \rightarrow \infty} \int \psi\ \rmd \hmu_{\theta, m}^k =  \int \psi\ \rmd \hmu_{\theta, m} $.
	This implies that $\hmu_{\theta, m}^k$ weakly converges to $\hmu_{\theta, m}$ in $\calP_p(\msy)$ \cite[Definition 6.7]{Villani2008}, which, by \cite[Theorem 6.8]{Villani2008}, is equivalent to $\lim_{k \rightarrow \infty} \wassersteinD[p](\hmu_{\theta, m}^k, \hmu_{\theta, m}) = 0$. By applying the triangle inequality and \cite[Proposition 5.1.3]{Bonnotte2013}, there exists $C \geq 0$ such that, for any empirical measure $\hmu_n$,
    \begin{align*}
        | \swassersteinD[p](\hmu_n, \hmu_{\theta, m}^k) - \swassersteinD[p](\hmu_n, \hmu_{\theta, m}) | &\leq \swassersteinD[p](\hmu_{\theta, m}^k, \hmu_{\theta, m}) \\
        &\leq C^{1/p}\  \wassersteinD[p](\hmu_{\theta, m}^k, \hmu_{\theta, m}) \eqsp .
    \end{align*}
    \noindent We conclude that $\lim_{k \rightarrow \infty} \swassersteinD[p](\hmu_n, \hmu_{\theta, m}^k) = \swassersteinD[p](\hmu_n, \hmu_{\theta, m})$, making condition~\ref{cond:ii} applicable. 
\end{proof}

Next, we study the limiting SW-ABC posterior when the value of $\varepsilon$ is fixed and the number of observations increases, \ie~$n \rightarrow \infty$. We suppose the size $m$ of the synthetic dataset grows to $\alpha n$ with $\alpha > 0$, such that $m$ can be written as a function of $n$, $m(n)$, satisfying $\lim_{n \rightarrow \infty} m(n) = \infty$.  We show that, under this setting and appropriate additional conditions, the resulting approximate posterior converges to the prior distribution on $\theta$ restricted to the region $\set{\theta \in \Theta}{\swassersteinD[p](\mu_{\theta_\star}, \mu_\theta) \leq \varepsilon}$. %

\begin{proposition}%
\label{prop:2} 
  Let $p\geq 1$, $\varepsilon >0$ and  $(m(n))_{n \in \nset^*}$ be an increasing sequence satisfying $\lim_{n \rightarrow \infty} m(n)/n = \alpha$, for $\alpha >0$. Assume that the statistical model $\model_{\Theta}$ is well specified, \ie~there exists $\theta_\star \in \Theta$ such that $\mu_\star = \mu_{\theta_\star}$,  and that almost surely the following holds:
\begin{equation}
  \label{assum:cvg_wass}
  \lim_{n \rightarrow \infty} \swassersteinD[p](\hmu_n, \hmu_{\theta, m(n)}) = \swassersteinD[p](\mu_{\theta_\star}, \mu_\theta) \eqsp,
\end{equation}
where $\hmu_n$, $\hmu_{\theta, m(n)}$ denote the empirical distributions of the observations $\bfy_{1:n}$ and synthetic data $\bfz_{1:m(n)}$ respectively. Then, the SW-ABC posterior converges to the restriction of the prior $\pi$ on the region $\set{\theta \in \Theta}{\swassersteinD[p](\mu_{\theta_\star}, \mu_\theta) \leq \varepsilon}$ as $n \rightarrow \infty$, %
\ie~for any $\theta \in \Theta$,
	\begin{align*}
		\lim_{n \rightarrow \infty} \pi_{\bfy_{1:n}}^{\varepsilon}(\theta) &= \pi(\theta | \swassersteinD[p](\mu_{\theta_\star}, \mu_\theta) \leq \varepsilon) \\
		&\propto \pi(\theta) \1 \{ \swassersteinD[p](\mu_{\theta_\star}, \mu_\theta) \leq \varepsilon \} \eqsp .
	\end{align*}
\end{proposition}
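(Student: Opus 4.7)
The plan is to mirror the proof of \Cref{prop:1} and reduce \Cref{prop:2} to \cite[Proposition 3.2]{Bernton2019}, which is the large-$n$ counterpart of the result used for \Cref{prop:1}. That generic statement guarantees that, whenever the model is well specified (which is assumed) and the data discrepancy applied to the empirical measures converges almost surely to the discrepancy between the population laws, the ABC posterior converges to the prior restricted to the $\varepsilon$-neighborhood of $\theta_\star$ for that discrepancy. The second hypothesis is exactly \eqref{assum:cvg_wass}, so the proof reduces to checking that $\swassersteinD[p]$ fits the abstract framework of \cite{Bernton2019}: it is jointly measurable in $(\theta, \bfz_{1:m(n)})$, non-negative, and satisfies the triangle inequality by \cite[Proposition 5.1.2]{Bonnotte2013}.

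If I were to unwind the argument directly, I would write the SW-ABC posterior mass of a measurable $\msb \subset \Theta$ as a ratio $N_n(\msb)/Z_n$ with $N_n(\msb) = \int_{\msb} g_n(\theta)\,\pi(\rmd\theta)$, $Z_n = N_n(\Theta)$, and $g_n(\theta) = \mu_\theta^{\otimes m(n)}\bigl(\{\bfz_{1:m(n)} : \swassersteinD[p](\hmu_n, \hmu_{\theta, m(n)}) \leq \varepsilon\}\bigr)$. The first step uses \eqref{assum:cvg_wass}: for every $\theta$ with $\swassersteinD[p](\mu_{\theta_\star}, \mu_\theta) \neq \varepsilon$, the indicator converges almost surely to $g(\theta) := \1\{\swassersteinD[p](\mu_{\theta_\star}, \mu_\theta) \leq \varepsilon\}$. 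The second step applies bounded convergence (the indicator lies in $[0,1]$) to get $g_n(\theta) \to g(\theta)$ pointwise for $\pi$-a.e.\ $\theta$. The third step applies bounded convergence a second time, now over $\Theta$ under $\pi$, yielding $N_n(\msb) \to \int_{\msb} g(\theta)\,\pi(\rmd\theta)$ and $Z_n \to Z := \pi(\{\theta : \swassersteinD[p](\mu_{\theta_\star}, \mu_\theta) \leq \varepsilon\})$; strict positivity of $Z$ is immediate since $\theta_\star$ lies in that set (the discrepancy vanishes there) and in the support of $\pi$. Dividing the two limits delivers the claim.

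The main obstacle is the discontinuity of $x \mapsto \1\{x \leq \varepsilon\}$ at $x = \varepsilon$: on the boundary set $\calE_\varepsilon := \{\theta \in \Theta : \swassersteinD[p](\mu_{\theta_\star}, \mu_\theta) = \varepsilon\}$, the pointwise limit in step one is indeterminate, so the argument only closes if one adds the standard (tacit) hypothesis $\pi(\calE_\varepsilon) = 0$, paralleling analogous regularity assumptions in \cite{Bernton2019}. A secondary, routine issue is the measurability bookkeeping needed to promote the "a.s.-in-data for each $\theta$" statement of \eqref{assum:cvg_wass} into a jointly measurable event on $\Omega \times \Theta$ so that Fubini can be applied; this is straightforward given that $\Theta$ and $\msy$ are Polish. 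Everything else amounts to a mechanical application of dominated convergence.
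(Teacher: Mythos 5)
Your reduction lands in essentially the same place as the paper, which disposes of \Cref{prop:2} in one line by invoking \cite[Theorem 1]{Jiang18} applied to $\swassersteinD[p]$; you instead point to \cite[Proposition 3.2]{Bernton2019}, the large-$n$ analogue of the result used for \Cref{prop:1}. The two citations carry the same content (almost-sure convergence of the empirical discrepancy to the population discrepancy implies convergence of the ABC posterior to the restricted prior), so the choice of reference is immaterial; what you add, and what the paper omits entirely, is the unwound argument --- the ratio $N_n(\msb)/Z_n$, two applications of bounded convergence, and the identification of the limit --- which is precisely the proof of those generic theorems. Your observation that the argument needs $\pi(\calE_\varepsilon)=0$ for the boundary set $\calE_\varepsilon=\{\theta \in \Theta : \swassersteinD[p](\mu_{\theta_\star},\mu_\theta)=\varepsilon\}$ is correct and worth making explicit: that condition is a hypothesis of the cited results but is stated nowhere in \Cref{prop:2}, so the proposition as written inherits it tacitly.

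One step you dismiss too quickly is the strict positivity of the limiting normalizer $Z=\pi(\{\theta : \swassersteinD[p](\mu_{\theta_\star},\mu_\theta)\leq\varepsilon\})$. The fact that $\theta_\star$ belongs to this set and lies in the support of $\pi$ does not by itself give $Z>0$: a sublevel set of a discontinuous function can meet the support of $\pi$ in a $\pi$-null set. You need the sublevel set to contain an open neighbourhood of a support point, which follows if $\theta\mapsto\swassersteinD[p](\mu_{\theta_\star},\mu_\theta)$ is continuous at $\theta_\star$ --- or one simply assumes $Z>0$ outright, as the generic theorems you and the paper invoke both do. This is a missing hypothesis of the proposition rather than a flaw in your mechanism, but ``immediate'' overstates it. Everything else, including the measurability bookkeeping you flag, is routine on Polish spaces as you say.
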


\begin{proof}[Proof of \Cref{prop:2}]
	The result follows from the application of \cite[Theorem 1]{Jiang18} to $\swassersteinD[p]$ and the required conditions. %
\end{proof}

\begin{remark}
	Condition \eqref{assum:cvg_wass} is a mild assumption, e.g. is fulfilled if $\msy$ is compact and separable: in this case, for any $\nu \in \calP_p(\msy)$ and its empirical instantiation $\hnu_n$ %
	, $\lim_{n \rightarrow \infty} \wassersteinD[p](\nu, \hnu_n) = 0$ $\nu$-almost surely \cite{Weed2019}, then $\lim_{n \rightarrow \infty} \swassersteinD[p](\nu, \hnu_n) = 0$ $\nu$-almost surely \cite[Proposition 5.1.3]{Bonnotte2013}, and \eqref{assum:cvg_wass} follows by applying the triangle inequality.
\end{remark}

\section{Experiments}
\label{sec:experiments}

\vspace{-5pt}

\begin{figure}[t] 
  \begin{minipage}[b]{\linewidth}
    \centering
    \centerline{\includegraphics[width=\textwidth]{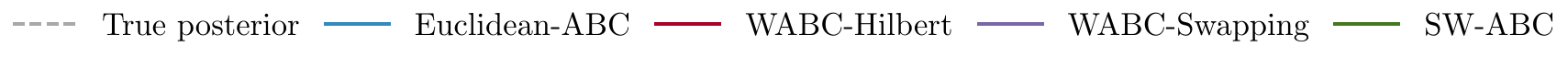}}
    \centerline{}\medskip \vspace{-0.6cm}
  \end{minipage}
  \begin{minipage}[b]{.3465\linewidth}
    \centering
    \centerline{\includegraphics[width=\textwidth, trim={2mm 3mm 1mm 2mm}, clip]{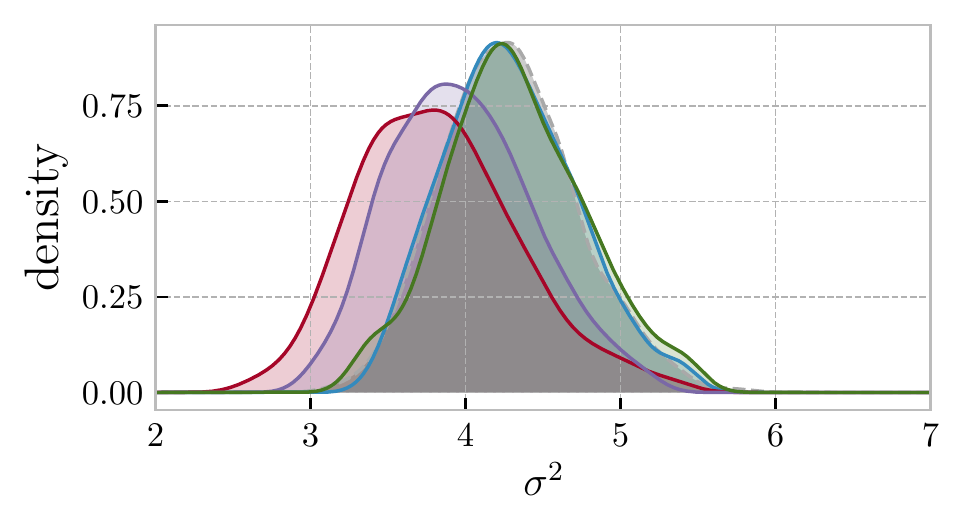}}
    \centerline{}
  \end{minipage}
  \hspace{-.5mm}
  \begin{minipage}[b]{.319\linewidth}
    \centering
    \centerline{\includegraphics[width=\textwidth, trim={8mm 3mm 1mm 2mm}, clip]{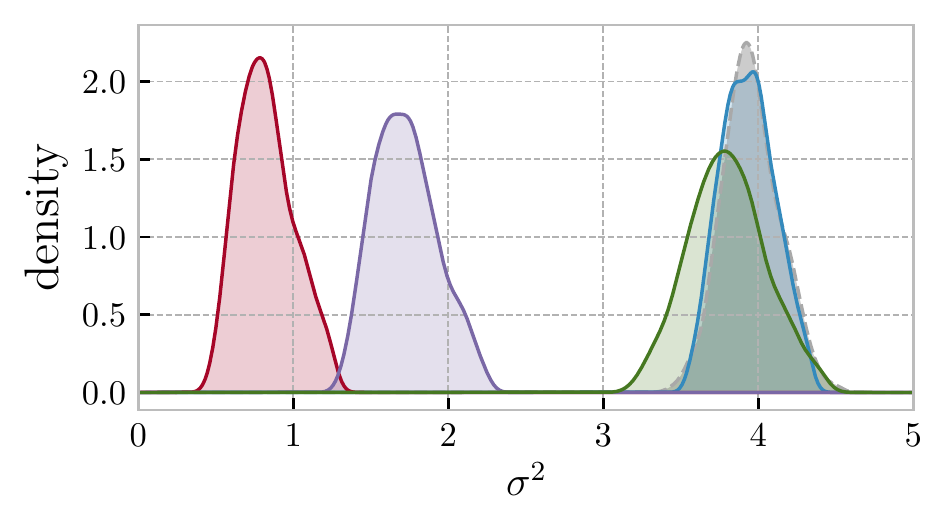}}
    \centerline{}
  \end{minipage}
  \hspace{-.5mm}
  \begin{minipage}[b]{.309\linewidth}
    \centering
    \centerline{\includegraphics[width=\textwidth, trim={8mm 3mm 1mm 2mm}, clip]{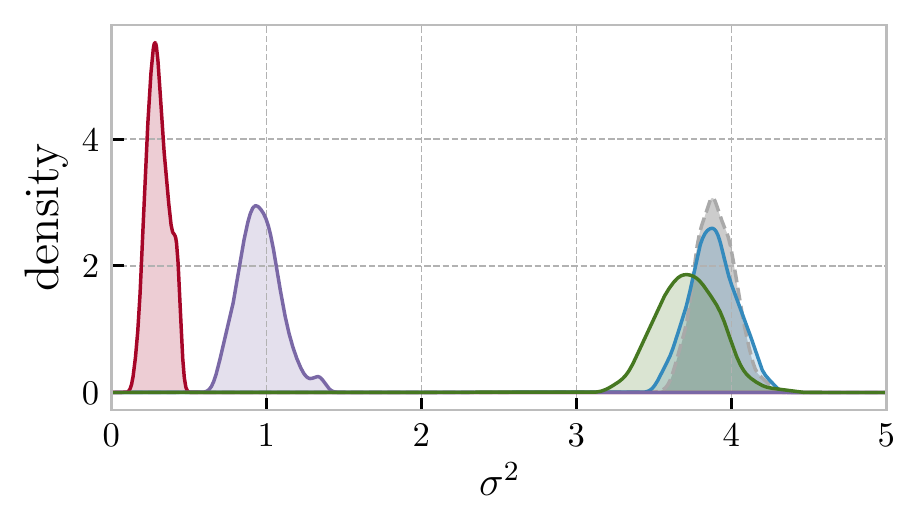}}
    \centerline{}
  \end{minipage}  \vspace{-6.5mm} \\
  \begin{minipage}[b]{.342\linewidth}
    \centering
    \centerline{\includegraphics[width=\textwidth, trim={2mm 3mm 2mm 2mm}, clip]{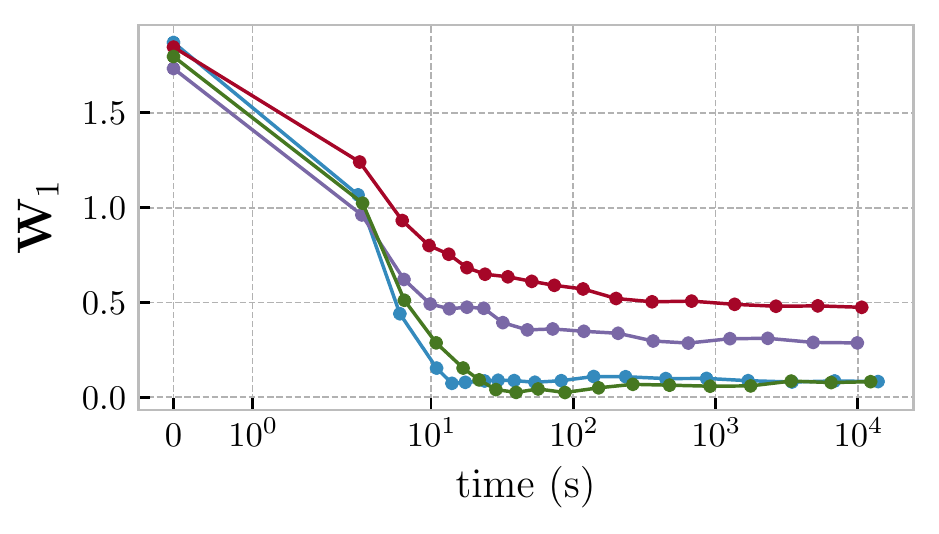}}
    \centerline{(a) $d = 2$}
  \end{minipage}
  \hspace{0mm}
  \begin{minipage}[b]{.313\linewidth}
    \centering
    \centerline{\includegraphics[width=\textwidth, trim={7mm 3mm 2mm 2mm}, clip]{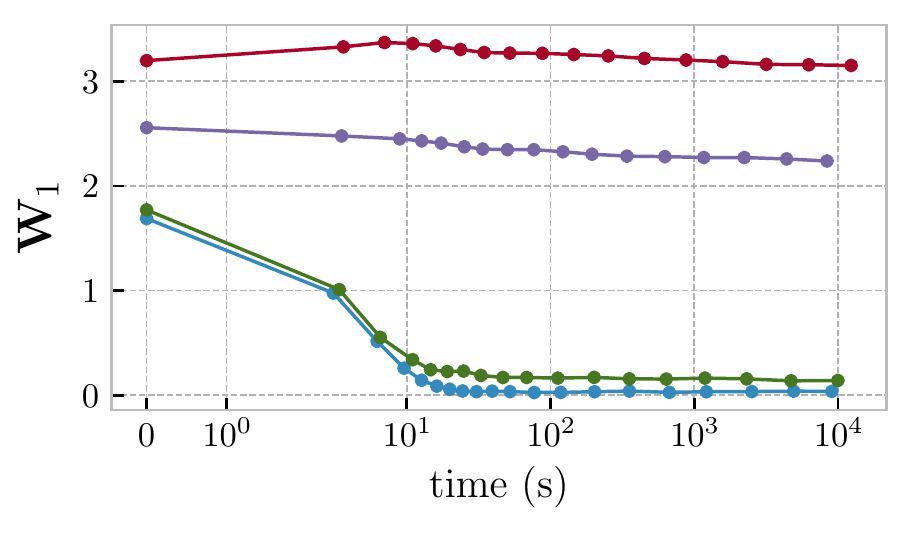}}
    \centerline{(b) $d = 10$} 
  \end{minipage}
  \hspace{0mm}
  \begin{minipage}[b]{.313\linewidth}
    \centering
    \centerline{\includegraphics[width=\textwidth, trim={7mm 3mm 2mm 2mm}, clip]{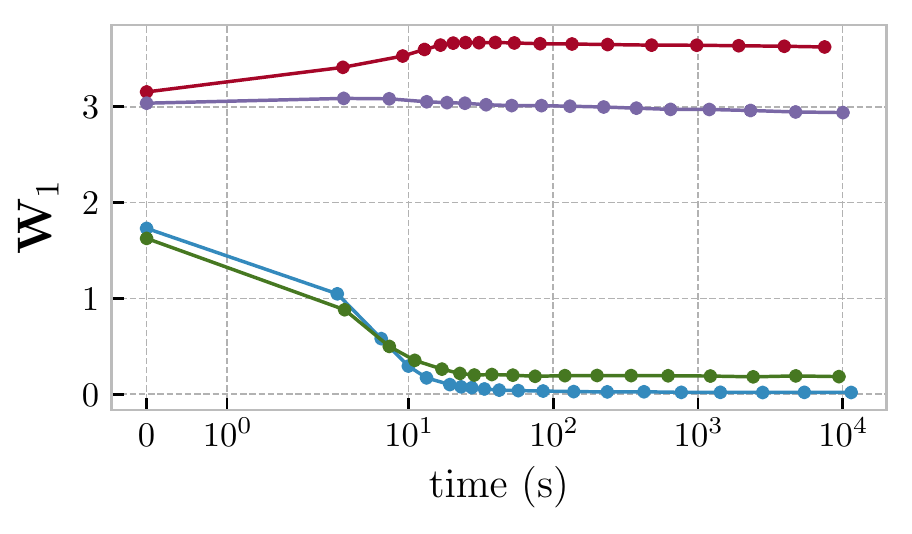}}
    \centerline{(c) $d = 20$}
  \end{minipage}
  \vspace{-11pt}
  \caption{Comparison of SMC-ABC strategies in the multivariate Gaussians problem. Each strategy uses 1000 particles and are run for 3 hours max. First row shows ABC and true posteriors of $\sigma^2$, second row reports $\wassersteinD[1]$-distance to true posterior vs. time. SW is approximated with its MC estimate over 100 random projections.}
  \label{fig:gaussian_exp}
  \vspace{-10pt}
\end{figure}

\noindent \textbf{Multivariate Gaussians.} As a first set of experiments, we investigate the performance of SW-ABC on a synthetical setting where the posterior distribution is analytically available. We consider $n = 100$ observations $(\bfy_i)_{i=1}^n$ \iid~from a $d$-dimensional Gaussian $\calN(\bfm_\star, \sigma_\star^2 \bfI_d)$, with $\bfm_\star \sim \calN(\bfnot, \bfI_d)$ and $\sigma_\star^2 = 4$. The parameter $\theta$ is $\sigma^2$ for which the prior distribution is assigned to be an inverse gamma distribution $\mathcal{IG}(1, 1)$. Therefore, the posterior distribution of $\sigma^2$ given $(\bfy_i)_{i=1}^n$ and $\bfm_{\star}$ is $\mathcal{IG}(1 + n \cdot d/2, 1 + 2^{-1}\sum_{i=1}^n \norm[2]{\bfy_i - \bfm_\star}) $. %
We compare SW-ABC against ABC using the Euclidean distance between sample variances (Euclidean-ABC), WABC with the Hilbert distance, WABC with the swapping distance and KL-ABC. Each ABC approximation was obtained using the sequential Monte Carlo sampler-based ABC method \cite{Toni2009}, which is more computationally efficient than vanilla ABC~\eqref{ps:vanilla_abc} and implemented in the package \texttt{pyABC} \cite{Klinger2018}. We provide our code in \cite{swabc_code}. \Cref{fig:gaussian_exp} reports for $d \in \{2, 10, 20\}$, the resulting ABC posteriors and $\wassersteinD[1]$ to the true posterior vs. time. Due to the poor performance of the estimator of KL between two empirical distributions proposed in \cite{Jiang18} (see Fig.~\ref{fig:comparison_distances}), KL-ABC fails at approximating well the posterior in these experiments. Hence, we excluded those results from Fig.~\ref{fig:gaussian_exp} for clarity. Euclidean-ABC provides the most accurate approximation as expected since it relies on statistics that are sufficient in our setting. %
WABC performs poorly with high-dimensional observations, contrary to SW-ABC, which approximates well the posterior for each dimension value and is as fast. \vspace{4pt}

\noindent \textbf{Image denoising.} %
We now evaluate our approach on a real application, namely image denoising. We consider a widely used algorithm for this task, the Non-Local Means algorithm (NL-means, \cite{buades2005non}), and we present a novel variant of it derived from SW-ABC. 

We formally define the denoising problem: let $\clean \in \rset^{M \times N}$, denote a clean gray-level image. We observe a corrupted version of this image, $\noisy = \clean + \noise$, where $\noise$ is some noise in $\rset^{M \times N}$. %
The goal is to restore $\clean$ from $\noisy$. We focus on denoising methods that consider `patch-based representations' of images, e.g. NL-means. Specifically, let $r \in \mathbb{N}$ be a patch size and $\indiceset = \{1, \dots, M\} \times \{1, \dots, N\}$ the set of pixel positions. For $\indice \in \indiceset$, $\clean'(\indice)$ denotes the pixel value at position $\indice$ in image $\clean'$, and $\patch_{\indice}$ is a $(2r+1) \times (2r+1)$ window in $\noisy$ centered at $\indice$: for $\indicetrois \in \{-r, \dots, r\}^2$,\; $\patch_{\indice}(\indicetrois) = \noisy(\indice + \indicetrois)$, where $\noisy$ is extended to $\zset^2$ by periodicity. Let $\dic \subset \indiceset$ be a dictionary of positions, and $\phi : I \to D$ such that, for $\indice \in I$, $\phi(\indice) = \argmin_{\indicedic \in D} \norm{\patch_{\indice} - \patch_{\indicedic}}_2$, \ie~$\phi(\indice)$ is the position in $\dic$ of the most similar patch to $\patch_\indice$. %
For $\indicedic \in \dic$, an estimator of $\patch_{\indicedic}$ is given by $\hat{P}_{\indicedic} = \mathbb{E}_{\pi(\indice | (P_{\indicetrois})_{\indicetrois \in \phi^{-1}(\indicedic)}) \tilde{\pi}(\mathbf{l})}[ P_{\mathbf{i} + \mathbf{l}} ]$,
$\tilde{\pi}$ being the uniform distribution on $\phi^{-1}(\indicedic)$. In practice, it is approximated with a Monte Carlo scheme: 
\begin{equation}
  \label{eq:hat_p_approx}
  \hat{P}_{\indicedic} \approx (Tn)^{-1} \sum\nolimits_{t=1}^T \sum\nolimits_{s=1}^S P_{\indice^{(t)} + \mathbf{l}^{(s)}} \eqsp ,
\end{equation}
where $\indice^{(t)} \sim \pi(\indice^{(t)} | (P_{\indicetrois})_{\indicetrois \in \phi^{-1}(\indicedic)})$, $\mathbf{l}^{(s)} \sim \tilde{\pi}(\mathbf{l})$, and $\indice$, $\mathbf{l}$ are mutually independent. Finally, we construct an estimate $\hat{\clean}$ of $\clean$ as follows: for any $\indice \in \indiceset$, \; $\hat{\clean}({\indice}) =   \sum_{\indicetrois, \|\indicetrois - \indice\|_\infty \leq r  } \hat{P}_{\phi(\indicetrois)}(\indice - \indicetrois) \eqsp (2 r + 1)^{-2}$. %
The classical NL-means estimator corresponds to the case where $\dic = \indiceset$ (thus $\phi  = \Id$) and for any $\indice \in I$ and $P \in \rset^{(2r+1) \times (2r+1)}$, $\pi(\indice, P) \propto \1_{W}(\indice) \rme^{- \norm{\patch - \patch_{\indice}}^2 / (2\sigma^2)}$, where $W$ is a search window. 

In our work, we assume that the likelihood $\pi(P | \indice)$ is not available, but we observe for $j \in \dic$, %
$(P_{\indicetrois_\ell})_{\ell=1}^m$ ($\indicetrois_\ell \in  \phi^{-1}(\indicedic)$) %
\iid~from $\pi(\cdot | \indice)$.
By replacing %
$\pi(\indice | (P_{\indicetrois_\ell})_{\ell=1}^m)$ in \eqref{eq:hat_p_approx} by the SW-ABC posterior, we obtain the proposed denoising method, called the SW-ABC NL-means algorithm. We provide our Python implementation in \cite{swabc_code}.

We compare our approach with the classical NL-means. We consider one of the standard image denoising datasets \cite{fan2019brief}, called CBSD68 \cite{martin2001database} and consisting of $68$ colored images of size $321 \times 481$. We first convert the images to gray scale, then manually corrupt each of them with a Gaussian noise with standard deviation $\sigma $, and try to recover the clean image. %
The quality of the output images is evaluated with the Peak Signal to Noise Ratio ($\PSNR$) measure:\; $\PSNR = - 10 \log_{10}\defEns{ \norm{\clean - \hat{\clean}}_2^2 / (255^2 N M)}$. In our experiments, we use a dictionary of $1000$ patches picked uniformly at random, we set $T = S = m = 10$, $r = 3$, $W = \{-10, \dots, 10\}^2$, $\vareps = (2 r + 1)^2$, and we compute SW with a MC scheme over $L=100$ projections.

We report the average PSNR values for different values of the noise level $\sigma$ in Table~\ref{tbl:results}. We observe that for small $\sigma$, NL-means provides more accurate results, whereas when $\sigma$ becomes larger SW-ABC outperforms NL-means, thanks to the patch representation and the use of SW. 
On the other hand, another important advantage of SW-ABC becomes prominent in the computation time: the proposed approach takes $\approx 6s$ on a standard laptop computer per image whereas the classical NL-means algorithm takes $\approx 30s$. Indeed, the computational complexity of SW-ABC NL-means is upper-bounded by %
$\abs{D}TS\hspace{.2mm}C_{\mathrm{SW}}$, with $C_{\mathrm{SW}} = Lm\log(m)$ is the cost of computing SW, whereas it is $NM\abs{W}(2r+1)^2$ for the na\"{i}ve implementation of NL-means, where $\abs{A}$ denotes the cardinal number of the set $A$. %
We can observe that SW-ABC has a lower computational complexity since $\abs{D} \ll NM$ in practice. We note that the computation time of NL-means can be improved by certain acceleration techniques, which can be directly used to improve the speed of SW-ABC NL-means as well. Finally, in \Cref{fig:multiple_images}, we illustrate the performance of SW-ABC on two $512 \times 512$ images for visual inspection. The results show that the injected noise is successfully removed by the proposed approach.

\begin{table}
\caption{Comparison of NL-means and SW-ABC on the image denoising task in terms of average PSNR (dB). For each $\sigma$, we fine-tuned the hyperparameters of NL-means and reported the best result.}
\label{tbl:results}
\vspace{-20pt}
\begin{center}

\renewcommand{\arraystretch}{0.8}
\begin{tabular}{ l | c | c | c | c |}
\toprule
& $\sigma = 10$ & $\sigma = 20$ & $\sigma = 30$ & $\sigma = 50$ \\
\toprule
NL-means & $\mathbf{30.43}$ & $\mathbf{26.32}$ & $24.22$ & $21.99$\\ 
\midrule
SW-ABC & $27.09$ & $26.26$ & $\mathbf{24.86}$ & $\mathbf{22.56}$\\ 
\bottomrule
\end{tabular}
\end{center}
\vspace{-28pt}
\end{table}

\newcommand{\figsize}{0.2428}
\captionsetup[subfigure]{labelformat=empty}
\begin{figure}[t]
  \subfloat[]{\includegraphics[width=\figsize\linewidth]{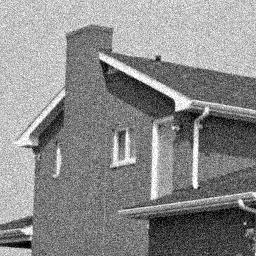}} \hspace{.2pt} 
  \subfloat[]{\includegraphics[width=\figsize\linewidth]{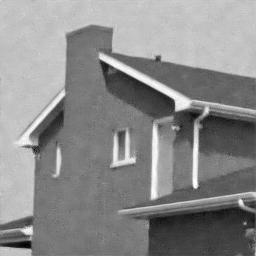}}  \hspace{1.5pt} %
  \subfloat[]{\includegraphics[width=\figsize\linewidth]{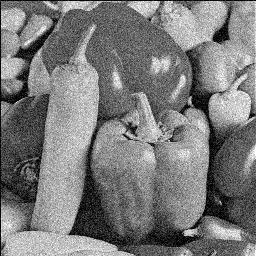}} \hspace{.2pt} 
  \subfloat[]{\includegraphics[width=\figsize\linewidth]{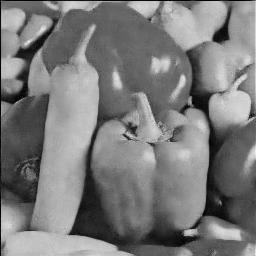}} %
  \vspace{-0.3cm}
  \vspace{-5pt}
  \caption{Visualization of the results. For each couple, the left one is the noisy image ($\sigma=20$) and the right one is the output of SW-ABC. %
  }
  \label{fig:multiple_images}
  \vspace{-15pt}
\end{figure}

\section{Conclusion}
\label{sec:conclusion}

\vspace{-5pt}

We proposed a novel ABC method, SW-ABC, based on the Sliced-Wasserstein distance. We derived asymptotic guarantees for the convergence of the SW-ABC posterior to the true posterior under different regimes, and we evaluated our approach on a synthetical and an image denoising problem. Our results showed that SW-ABC provides an accurate approximation of the posterior, even with high-dimensional data spaces and a small number of samples. Future work includes extending SW-ABC to generalized SW distances \cite{Kolouri2019GSW}.

% References should be produced using the bibtex program from suitable
% BiBTeX files (here: strings, refs, manuals). The IEEEbib.bst bibliography
% style file from IEEE produces unsorted bibliography list.
% -------------------------------------------------------------------------
\bibliographystyle{IEEEbib}
\bibliography{refs_swabc}

\end{document}